\title{Approximation schemes for bounded distance problems on fractionally treewidth-fragile graphs}
\author[1]{Zden\v{e}k Dvo\v{r}\'ak\thanks{Supported by the ERC-CZ project LL2005 (Algorithms and complexity within and beyond bounded expansion) of the Ministry of Education of Czech Republic. email:~\texttt{rakdver@iuuk.mff.cuni.cz}}}
\author[2]{Abhiruk Lahiri\thanks{Supported by ISF grant 822/18 and Ariel University Post-doctoral fellowship. email:~\texttt{abhiruk@ariel.ac.il}}}
\affil[1]{Charles University, Prague, Czech Republic}
\affil[2]{Ariel University}
\newtheorem{theorem}{Theorem}
\newtheorem{corollary}[theorem]{Corollary}
\newtheorem{lemma}[theorem]{Lemma}
\DeclareMathOperator{\msol}{\mathsf{MSOL}}
\DeclareMathOperator{\ptas}{\mathsf{PTAS}}
\DeclareMathOperator{\np}{\mathsf{NP}}
\newcommand{\GG}{\mathcal{G}}
\begin{document}

\maketitle

\begin{abstract}
We give polynomial-time approximation schemes for monotone maximization problems expressible in terms of distances (up to a fixed upper bound) and efficiently solvable in graphs of bounded treewidth.  
These schemes apply in all fractionally treewidth-fragile graph classes, a property which is true for many natural graph classes with sublinear separators.  
We also provide quasipolynomial-time approximation schemes for these problems in all classes with sublinear separators.
\end{abstract}

\section{Introduction}
In this paper, we consider optimization problems such as:
\begin{itemize}
\item \textsc{Maximum} $r$-\textsc{Independent Set}, $r \in \mathbb{Z}^+$: Given a graph $G$, the objective is to find a largest subset $X \subseteq V(G)$ such that distance in $G$ between any two vertices in $X$ is at least $r$.
\item \textsc{Maximum weight induced forest}: Given a graph $G$ and an assignment $w:V(G)\to\mathbb{Z}_0^+$ of non-negative weights to vertices, the objective is to find a subset $X \subseteq V(G)$ such that $G[X]$ does not contain a cycle and subject to that, $w(X)\coloneqq\sum_{v\in X} w(v)$ is maximized.
\item \textsc{Maximum} $(F,r)$-\textsc{Matching}, for a fixed connected graph $F$ and $r \in \mathbb{Z}^+$:
Given a graph $G$, the objective is to find a largest subset $X \subseteq V(G)$ such that $G[X]$ can be partitioned into vertex-disjoint copies of $F$ such that distance in $G$ between any two vertices belonging to different copies is at least $r$.  
\end{itemize}
To be precise, to fall into the scope of our work, the problem must satisfy the following conditions:
\begin{itemize}
\item It must be a \textbf{maximization problem on certain subsets of vertices} of an input graph, possibly with non-negative weights.
That is, the problem specifies which subsets of vertices of the input graph are \emph{admissible}, and the goal is to
find an admissible subset of largest size or weight.
\item The problem must be \textbf{defined in terms of distances between the vertices, up to some fixed bound}.  That is, there
exists a parameter $r\in \mathbb{Z}^+$ such that for any graphs $G$ and $G'$, sets $X\subseteq V(G)$ and $X'\subseteq V(G')$,
and a bijection $f:X\to X'$, if $\min(r,d_G(u,v))=\min(r,d_{G'}(f(u),f(v)))$ holds for all $u,v\in X$, then $X$ is
admissible in $G$ if and only if $X'$ is admissible in $G'$.
\item The problem must be \textbf{monotone} (i.e., all subsets of an admissible set must be admissible), or at least \textbf{near-monotone} (as happens for example for \textsc{Maximum} $(F,r)$-\textsc{Matching}) in the following sense: There exists a parameter $c\in \mathbb{Z}^+$ such that for any admissible set $A$ in a graph $G$, there exists a system $\{R_v\subseteq A:v\in A\}$ of subsets of $A$ such that every vertex belongs to $R_v$ for at most $c$ vertices $v\in A$, $v\in R_v$ for each $v\in A$, and for any $Z\subseteq A$, the subset $X\setminus \bigcup_{v\in Z} R_v$ is admissible in $G$.
\item The problem must be \textbf{tractable in graphs of bounded treewidth}, that is, there must exist a function $g$ and a polynomial $p$ such that given any graph $G$, its tree decomposition of width $t$, an assignment $w$ of non-negative weights to the vertices of $G$, and a set $X_0\subseteq X$, it is possible to find a maximum-weight admissible subset of $X_0$ in time $g(t)p(|V(G)|)$.
\end{itemize}
Let us call such problems \emph{$(\le\!r)$-distance determined $c$-near-monotone $(g,p)$-tw-tractable}.
Note that a convenient way to verify these assumptions is to show that the problem is expressible in \emph{solution-restricted Monadic Second-Order Logic} ($\msol$) \emph{with bounded-distance predicates}, i.e., by a $\msol$ formula with one free variable $X$ such that the quantification is restricted to subsets and elements of $X$, and using binary predicates $d_1$, \ldots, $d_r$, where $d_i(u,v)$ is interpreted as testing whether the distance between $u$ and $v$ in the whole graph is at most $i$. 
This ensures that the problem is $(\le\!r)$-distance determined, and $(g,O(n))$-tw-tractable for some function $g$ by Courcelle's metaalgorithmic result~\cite{Courcelle90}.

Of course, the problems satisfying the assumptions outlined above are typically hard to solve optimally, even in rather restrictive circumstances. 
For example, \textsc{Maximum Independent Set} is $\np$-hard even in planar graphs of maximum degree at most $3$ and arbitrarily large (fixed) girth~\cite{AlekseevLMM08}. 
Moreover, it is hard to approximate it within factor of $0.995$ in graphs of maximum degree at most three~\cite{BermanK99}. Hence, to obtain polynomial-time approximation schemes ($\ptas$), i.e., polynomial-time algorithms for approximating within any fixed precision, further restrictions on the considered graphs are needed.

A natural restriction that has been considered in this context is the requirement that the graphs have sublinear separators (a set $S$ of vertices of a graph $G$ is a \emph{balanced separator} if every component of $G\setminus S$ has at most $|V(G)|/2$ vertices, and a hereditary class $\GG$ of graphs has \emph{sublinear separators} if for some $c<1$, every graph $G \in \GG$ has a balanced separator of size $O(|V(G)|^c)$). This restriction still lets us speak about many interesting graph classes (planar graphs~\cite{LiptonT79} and more generally proper minor-closed classes~\cite{AlonST90}, many geometric graph classes~\cite{MillerTTV97}, \ldots).
Moreover, the problems discussed above admit $\ptas$ in all classes with sublinear separators or at least in substantial subclasses of these graphs:
\begin{itemize}
\item \textsc{Maximum Independent Set} has been shown to admit $\ptas$ in graphs with sublinear separators already in the foundational paper of Lipton and Tarjan~\cite{LiptonT80}.
\item For any positive integer, \textsc{Maximum} $r$-\textsc{Independent Set} and several other problems are known to admit $\ptas$ in graphs with sublinear separators by a straightforward local search algorithm~\cite{Har-PeledQ17}.
\item All of the problems mentioned above (an more) are known to admit $\ptas$ in planar graphs by a layering argument of Baker~\cite{Baker94}; this approach can be extended to some related graph classes, including all proper minor-closed classes~\cite{DawarGKS06,Dvorak20}.
\item The problems also admit $\ptas$ in graph classes that admit thin systems of overlays~\cite{Dvorak18}, a technical property satisfied by all proper minor-closed classes and by all hereditary classes with sublinear separators and bounded maximum degree.
\item Bidimensionality arguments~\cite{DemaineH05} apply to a wide range of problems in proper minor-closed graph classes.
\end{itemize}
However, each of the outlined approaches has drawbacks. On one side, the local search approach only applies to specific problems and does not work at all in the weighted setting. 
On the other side of the spectrum, Baker's approach is quite general as far as the problems go, but there are many hereditary graph classes with sublinear separators to which it does not seem to apply. 
The approach through thin systems of overlays tries to balance these concerns, but it is rather technical and establishing this property is difficult.

Another option that has been explored is via \emph{fractional treewidth-fragility}.  For a function $f \colon \mathbb{Z}^+ \times \mathbb{Z}^+ \to \mathbb{Z}^+$ and a polynomial $p$, a class of graphs $\GG$ is \emph{$p$-efficiently fractionally treewidth-$f$-fragile} if there exists an algorithm that for every $k \in \mathbb{Z}^+$ and a graph $G \in \GG$ returns in time $p(|V(G)|)$ a collection of subsets $X_1, X_2, \dots X_m \subseteq V(G)$ such that each vertex of $G$ belongs to at most $m/k$ of the subsets, and moreover, for $i=1,\ldots,m$, the algorithm also returns a tree decomposition of $G \setminus X_i$ of width at most $f(k, |V(G)|)$.
We say a class is \emph{$p$-efficiently fractionally treewidth-fragile} if $f$ does not depend on its second argument (the number of vertices of $G$).  This property turns out to hold for basically all known natural graph classes with sublinear separators.
In particular, a hereditary class $\GG$ of graphs is efficiently fractionally treewidth-fragile if
\begin{itemize}
\item $\GG$ has sublinear separator and bounded maximum degree~\cite{Dvorak16},
\item $\GG$ is proper minor-closed~\cite{DeVosDOSRSV04, Dvorak20}, or
\item $\GG$ consists of intersection graphs of convex objects with bounded aspect ratio in a finite-dimensional Euclidean space and the graphs have bounded clique number, as can be seen by a modification of the argument of Erlebach et al.~\cite{ErlebachJS05}. This includes all graph classes with polynomial growth~\cite{KrauthgamerL07}.
\end{itemize}
In fact, Dvo\v{r}\'ak conjectured that every hereditary class with sublinear separators is fractionally treewidth-fragile, and gave the following result towards this conjecture.
\begin{theorem}[Dvo\v{r}\'ak~\cite{Dvorak18a}]\label{thm-pll}
There exists a polynomial $p$ so that the following claim holds. For every hereditary class $\GG$ of graphs with sublinear separators, there exists a polynomial $q$ such that $\GG$ is $p$-efficiently fractionally treewidth-$f$-fragile for the function $f(k,n)=q(k\log n)$.
\end{theorem}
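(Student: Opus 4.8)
The plan is to build the required collection directly from a recursive separator decomposition of $G$, and to charge the whole construction against the logarithmic depth of that decomposition; the $\log|V(G)|$ in the bound will come precisely from this depth.

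First I would record what the hypothesis buys us. Since $\GG$ is hereditary with sublinear separators, there are a constant $c<1$ and a polynomial-time algorithm producing, for any $H\in\GG$, a balanced separator of $H$ of size at most $\alpha|V(H)|^{c}$; iterating this yields, in polynomial time, a \emph{separator tree} of $G$ of depth $D\le\log_{2}|V(G)|$ whose nodes (``pieces'') $P$ carry separators $S_{P}$ with $|S_{P}|\le\alpha|P|^{c}$, whose children are the components of $G[P]-S_{P}$, and in which the sets $\{S_{P}\}$ partition $V(G)$ and the separator sizes decrease geometrically along every root-to-leaf path. In particular, assigning to a piece the union of the separators along its root path is a tree decomposition witnessing $\tw(G)=O(|V(G)|^{c})$. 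Two features matter later: (i) deleting the separators of all pieces in the top $D-\Theta(\log k)$ levels removes only $O(|V(G)|/k)$ vertices (geometric decrease) yet disconnects $G$ into pieces of size $\mathrm{poly}(k)$, hence drops the treewidth to $\mathrm{poly}(k)$; and (ii) because $\GG$ has sublinear separators it contains no large complete bipartite subgraph (a $K_{t,t}$ has no balanced separator of size $<t=|V|/2$), so no small piece can meet the rest of the graph through a ``complete-bipartite bottleneck''.

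The core of the proof is a recursion on $|V(G)|$ that turns the single deletable set of (i) into a collection in which no vertex is frequent. For the collection I would delete, with a random shift, a thin ``slice'' of the separator hierarchy — after recursively \emph{refining} the hierarchy (replacing each large separator by its own recursive decomposition) so that each slice is a small, $\le 1/k$-fraction of the vertices — and then glue the tree decompositions produced for the individual pieces, along their attachment sets, into a tree decomposition of the complement. Each level of the recursion contributes an additive $\mathrm{poly}(k)$ to the bag size; since the vertex count drops by a factor of at least $k\ge2$ per level there are $O(\log|V(G)|)$ levels, so the width accumulates to a bound of the form $q(k\log|V(G)|)$ with $q$ depending only on $\alpha$ and $c$ (hence only on $\GG$). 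A vertex is put at risk of deletion in exactly one level of the recursion, so its deletion frequency stays below $1/k$. Finally, the running time is the cost of computing the $O(\log|V(G)|)$ rounds of separators and assembling the $\mathrm{poly}(|V(G)|)$ sets of the collection; this is bounded by an \emph{absolute} polynomial $p$, with only $\alpha$, $c$ and $q$ depending on $\GG$, as the theorem demands.

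I expect the gluing step to be the main obstacle, and it is exactly where the unbounded-degree case departs from \cite{Dvorak16}. A small piece can have a large neighbourhood inside the part of the graph handled recursively, and a priori that neighbourhood is spread over many bags of the recursively produced decomposition, so it cannot simply be absorbed into one bag. Overcoming this requires making the recursion compatible with the partition of the ``upper'' graph into these neighbourhoods — using the no-large-biclique property to keep the relevant interfaces structured — so that attachments remain localized while the refined decomposition still has width $q(k\log|V(G)|)$. It is this tension — bounding the complement's treewidth forces deleting a downward-closed-like chunk near the top of the hierarchy, while keeping every vertex rare forbids deleting the same chunk every time — that makes the width bound depend, mildly, on $|V(G)|$ rather than on $k$ alone.
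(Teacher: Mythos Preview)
The paper does not contain a proof of this theorem: it is stated as a result of Dvo\v{r}\'ak and attributed to \cite{Dvorak18a}, with no argument given in the present paper. There is therefore nothing here to compare your proposal against.

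As a standalone sketch your plan has the right shape --- recursive separator tree, logarithmic depth, shifting which ``slice'' of the hierarchy gets deleted so that no vertex is hit too often --- and you correctly isolate the gluing step as the crux. But the proposal stops precisely where the real work begins: you articulate the tension (a small piece may attach to the rest of the graph through a large neighbourhood scattered across many bags, and forbidding $K_{t,t}$ does not by itself localize that interface) without saying how to resolve it. Phrases like ``making the recursion compatible with the partition \ldots\ so that attachments remain localized'' are a restatement of the goal, not a mechanism. To turn this into a proof you would need a concrete inductive invariant that controls, at every level, both the width of the partial decomposition and the size of the interface each piece presents to its complement, and you would need to show that invariant survives the refinement-and-glue step with only an additive $\mathrm{poly}(k)$ growth per level. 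As written, the claim that the glued decomposition has width $q(k\log n)$ is asserted rather than established.
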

Moreover, Dvo\v{r}\'ak~\cite{Dvorak16} observed that weighted \textsc{Maximum Independent Set} admits a $\ptas$ in any efficiently fractionally treewidth-fragile class of graphs. 
Indeed, the algorithm is quite simple, based on the observation that for the sets $X_1$, \ldots, $X_m$ from the definition of fractional treewidth-fragility, at least one of the graphs $G \setminus X_1$, \ldots, $G \setminus X_m$ (of bounded treewidth) contains an independent set whose weight is within the factor of $1-1/k$ from the optimal solution. 
A problem with this approach is that it does not generalize to more general problems; even for the \textsc{Maximum $2$-Independent Set} problem, the approach fails, since a $2$-independent set in $G \setminus X_i$ is not necessarily $2$-independent in $G$. 
Indeed, this observation served as one of the motivations behind more restrictive (and more technical) concepts employed in~\cite{Dvorak18,Dvorak20}.

As our main result, we show that this intuition is in fact false: There is a simple way how to extend the approach outlined in the previous paragraph to all bounded distance determined near-monotone tw-tractable problems.
\begin{theorem}\label{thm-main}
For every class $\GG$ of graphs with bounded expansion, there exists a function $h:\mathbb{Z}^+\times\mathbb{Z}^+\to\mathbb{Z}^+$ such that the following claim holds. 
Let $c$ and $r$ be positive integers, $g:\mathbb{Z}^+\to\mathbb{Z}^+$ and $f:\mathbb{Z}^+\times\mathbb{Z}^+\to\mathbb{Z}^+$ functions and $p$ and $q$ polynomials. 
If $\GG$ is $q$-efficiently fractionally treewidth-$f$-fragile, then for every $(\le\!r)$-distance determined $c$-near-monotone $(g,p)$-tw-tractable problem, there exists an algorithm that given a graph $G\in \GG$, an assignment of non-negative weights to vertices, and a positive integer $k$, returns in time $h(r,c)|V(G)|+q(|V(G)|)\cdot p(|V(G)|)\cdot g(f(h(r,c)k,|V(G)|))$ an admissible subset of $V(G)$ whose weight is within the factor of $1-1/k$ from the optimal one.
\end{theorem}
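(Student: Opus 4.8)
The plan is to imitate the textbook $\ptas$ for weighted \textsc{Maximum Independent Set} in efficiently fractionally treewidth-fragile classes, but to apply it to carefully modified bounded-treewidth graphs in which distances up to $r$ are faithfully preserved; producing these modified graphs is where the bounded-expansion hypothesis is used. Fix a function $h\colon\mathbb Z^+\times\mathbb Z^+\to\mathbb Z^+$ (to be determined by $\GG$, $r$ and $c$), and run the assumed $q$-efficient fractional treewidth-$f$-fragility algorithm on $G$ with parameter $h(r,c)k$: in time $q(|V(G)|)$ this returns sets $X_1,\dots,X_m\subseteq V(G)$ with every vertex in at most $m/(h(r,c)k)$ of them, and tree decompositions of $G\setminus X_i$ of width at most $f(h(r,c)k,|V(G)|)$. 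Below these sets are enlarged to $\widehat X_i\supseteq X_i$ (with every vertex still in at most $d\cdot m/(h(r,c)k)$ of the $\widehat X_i$, for a constant $d=d(\GG,r)$), and each $G\setminus \widehat X_i$ is turned into a graph $H_i$ with $V(G)\setminus\widehat X_i\subseteq V(H_i)$ such that $\tw(H_i)\le f(h(r,c)k,|V(G)|)$ and $\min\bigl(r,d_{H_i}(u,v)\bigr)=\min\bigl(r,d_G(u,v)\bigr)$ for all $u,v\in V(G)\setminus\widehat X_i$. Granting this, run for each $i$ the tw-tractable algorithm on $H_i$ with its tree decomposition and with $X_0=V(G)\setminus\widehat X_i$ (so that the returned set consists of vertices of $G$), obtaining in time $g\bigl(f(h(r,c)k,|V(G)|)\bigr)\cdot p(|V(G)|)$ a heaviest subset of $V(G)\setminus\widehat X_i$ admissible in $H_i$; since the problem is $(\le\!r)$-distance determined and distances up to $r$ agree with those of $G$, this set is admissible in $G$ as well. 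Output the heaviest of these $m\le q(|V(G)|)$ candidates. Every returned set is admissible in $G$, the running time is as claimed, and the linear term $h(r,c)|V(G)|$ accommodates the one-off preprocessing (a weak-coloring order of $G$ and the bounded-range distances needed to assemble the $H_i$).

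\noindent\textbf{Near-optimality.}
Let $A$ be a heaviest admissible set and let $\{R_v\colon v\in A\}$ be a system witnessing $c$-near-monotonicity. For each $i$ put $Z_i=A\cap\widehat X_i$; then $A_i\coloneqq A\setminus\bigcup_{v\in Z_i}R_v$ is admissible (apply the definition with $Z=Z_i$) and is disjoint from $\widehat X_i$ (since $v\in R_v$ for every $v\in Z_i$), so $A_i$ is a feasible candidate in the $i$-th subproblem. Using that each vertex lies in $R_v$ for at most $c$ indices $v$ and in $\widehat X_i$ for at most $d\cdot m/(h(r,c)k)$ indices $i$,
\[
\sum_{i=1}^m w\!\Bigl(\bigcup_{v\in Z_i}R_v\Bigr)\ \le\ \sum_{i=1}^m\ \sum_{v\in A\cap\widehat X_i} w(R_v)\ =\ \sum_{v\in A} w(R_v)\,\bigl|\{\,i:v\in\widehat X_i\,\}\bigr|\ \le\ \frac{d\,m}{h(r,c)k}\sum_{v\in A} w(R_v)\ \le\ \frac{c\,d\,m}{h(r,c)k}\,w(A).
\]
Hence some index $i^\ast$ satisfies $w(A\setminus A_{i^\ast})\le\frac{c\,d}{h(r,c)k}\,w(A)\le\frac1k\,w(A)$ provided $h(r,c)\ge c\,d$, and then the $i^\ast$-th subproblem returns a solution of weight at least $w(A_{i^\ast})\ge(1-1/k)\,w(A)$. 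Together with the previous paragraph this yields the approximation guarantee.

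\noindent\textbf{The graphs $H_i$, and the main obstacle.}
This is the only place where bounded expansion enters, and it is genuinely needed: taking $H_i=G\setminus X_i$ fails, because deleting $X_i$ can stretch short paths, yet reinstating all missing distance-$\le r$ shortcuts may create a clique (delete the centre of a large star) and ruin bounded treewidth. I would use that a class of bounded expansion has bounded weak coloring numbers: compute in linear time a vertex ordering witnessing $\mathrm{wcol}_{2r}(G)\le d$ for a constant $d=d(\GG,r)$, and build $H_i$ from $G\setminus\widehat X_i$ by inserting, for suitable pairs $u,w\in V(G)\setminus\widehat X_i$ with $w$ weakly $2r$-reachable from $u$ and $d_G(u,w)\ge 2$, an internally disjoint path of length $d_G(u,w)$ from $u$ to $w$. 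Since each vertex is weakly $2r$-reachable from at most $d$ vertices, only $O(rd\,|V(G)|)$ vertices are added; that no route becomes spuriously short is immediate because gadget-internal vertices have degree two, so every walk in $H_i$ projects onto a walk of the same length in $G$. The delicate direction is the other inequality: for $u,v\notin\widehat X_i$ with $d_G(u,v)\le r$, the $\le$-minimum vertex $w$ of a shortest $u$--$v$ path is weakly $r$-reachable from both $u$ and $v$ and satisfies $d_G(u,w)+d_G(w,v)=d_G(u,v)$, so the two gadgets through $w$ furnish a route of length $d_G(u,v)$ in $H_i$ — but only if $w\notin\widehat X_i$. Guaranteeing this is precisely the crux: it forces one to take $\widehat X_i$ to be a closure of $X_i$ under passing to $\le$-minima of shortest paths of length $\le r$, and the whole argument hinges on showing (again via bounds on weak coloring numbers) that this blows up the per-vertex multiplicity of the deletion sets by only the constant factor $d$, and that the gadgets can be routed so as to keep $\tw(H_i)$ within the bound $f(h(r,c)k,|V(G)|)$ recorded above. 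This closure/re-routing lemma is where I expect essentially all of the technical effort to go; its quantitative form is what forces the fragility parameter to be inflated by the factor $h(r,c)$, which — together with the factor $c$ from the near-monotone averaging — determines the function $h$ in the statement.
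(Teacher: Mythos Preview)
Your overall architecture --- run fragility with an inflated parameter, enlarge each $X_i$ to some $\widehat X_i$ so that $(\le r)$-distances between vertices of $V(G)\setminus\widehat X_i$ are the same as in $G$, solve the tw-tractable problem on a bounded-treewidth graph restricting the solution to $V(G)\setminus\widehat X_i$, and average --- is exactly the paper's, and your near-monotonicity averaging is essentially the paper's Lemma~\ref{lemma-avoid}. The divergence is in how the distance preservation is achieved, and there you have introduced a difficulty the paper never faces.

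You propose to \emph{modify} the graph: start from $G\setminus\widehat X_i$ and splice in gadget paths realising the correct $G$-distances for weakly $2r$-reachable pairs, then argue $\tw(H_i)$ stays within the fragility bound. That last step is the gap you yourself flag, and it is a real one: the weak-$2r$-reachability path lives in $G$, not in $G\setminus X_i$, so in the given tree decomposition of $G\setminus X_i$ the endpoints $u,w$ need not share a bag, and attaching a pendant $u$--$w$ path is, for treewidth purposes, the same as adding the edge $uw$. There is no evident reason the resulting graph has width $\le f(h(r,c)k,|V(G)|)$, and your proposed ``closure'' of $\widehat X_i$ does not address this, since the issue is the location of $u$ and $w$ in the decomposition, not membership in $\widehat X_i$.

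The paper sidesteps the whole problem by \emph{not modifying the graph at all}. It constructs (Lemma~\ref{lem:orient}, via fraternal augmentations) an orientation $\vec G$ of $G$ of maximum outdegree $d'(r)$ that \emph{represents $(\le r)$-distances}: whenever $d_G(u,v)=b\le r$, there is an inward-directed walk of length $b$ from $u$ to $v$ in $\vec G$. One then takes $\widehat X_i\coloneqq Y_{\vec G,r}(X_i)$ to be the set of vertices with an outward-directed walk of length $\le r$ into $X_i$; the outdegree bound gives the multiplicity blow-up factor $(d'(r)+1)^r$. The point (Lemma~\ref{lemma-admis}) is that every vertex of an inward-directed $u$--$v$ walk is reached from $u$ or from $v$ by an outward-directed walk, so if $u,v\notin Y_{\vec G,r}(X_i)$ that walk already misses $X_i$. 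Hence $d_{G-X_i}(u,v)=d_G(u,v)$ for all such pairs, and one can simply run the tw-tractable algorithm on $G- X_i$ itself (with its supplied tree decomposition) restricted to $X_0=V(G)\setminus Y_{\vec G,r}(X_i)$. No gadgets, no closure, and the treewidth question never arises. Your weak-colouring intuition is morally in the same family (the inward-directed walk plays the role of the path through the $\le$-minimum), but the orientation formulation is what makes the argument one line instead of a lemma you still have to prove.
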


Note that the assumption that $\GG$ has bounded expansion is of little consequence---it is true for any hereditary class with sublinear separators~\cite{DvorakN16} as well as for any fractionally treewidth-fragile class~\cite{Dvorak16}; see Section~\ref{sec-dist} for more details. 
The time complexity of the algorithm from Theorem~\ref{thm-main} is polynomial if $f$ does not depend on its second argument, and quasipolynomial (exponential in a polylogaritmic function) if $f$ is logarithmic in the second argument and $g$ is single-exponential (i.e., if $\log \log g(n)=O(\log n)$). 
Hence, we obtain the following corollaries.

\begin{corollary}
Let $c$ and $r$ be positive integers, $g:\mathbb{Z}^+\to\mathbb{Z}^+$ a function and $p$ a polynomial. 
Every $(\le\!r)$-distance determined $c$-near-monotone $(g,p)$-tw-tractable problem admits a $\ptas$ in any efficiently fractionally treewidth-fragile class of graphs.
\end{corollary}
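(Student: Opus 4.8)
The plan is to obtain the corollary as a direct specialization of Theorem~\ref{thm-main}, with essentially no new work beyond unwinding the definitions. First I would recall that, by definition, an efficiently fractionally treewidth-fragile class $\GG$ is $q$-efficiently fractionally treewidth-$f$-fragile for some polynomial $q$ and some function $f$ whose value does not depend on its second argument; write $f(k,n)=\bar f(k)$. I would also invoke the fact, noted after Theorem~\ref{thm-main} and attributed to Dvo\v{r}\'ak~\cite{Dvorak16}, that every fractionally treewidth-fragile class has bounded expansion, so that Theorem~\ref{thm-main} is applicable to $\GG$ and supplies a function $h:\mathbb{Z}^+\times\mathbb{Z}^+\to\mathbb{Z}^+$.

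Next I would fix the problem in question---which by hypothesis is $(\le\!r)$-distance determined, $c$-near-monotone, and $(g,p)$-tw-tractable---together with an arbitrary target precision, i.e.\ a positive integer $k$. Here $r$, $c$, and $k$ are constants (the problem and the precision are fixed), and $g$, $p$, $q$, $h$ are fixed as well. Applying Theorem~\ref{thm-main} then yields an algorithm that, on input $G\in\GG$ with non-negative vertex weights, outputs an admissible set whose weight is at least $(1-1/k)$ times the optimum, in time
\[
h(r,c)\,|V(G)| + q(|V(G)|)\cdot p(|V(G)|)\cdot g\bigl(f(h(r,c)k,|V(G)|)\bigr).
\]

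Finally I would argue that this running time is polynomial in $|V(G)|$. Since $f(k,n)=\bar f(k)$ does not depend on $n$, the argument $f(h(r,c)k,|V(G)|)=\bar f(h(r,c)k)$ is a constant (depending only on the fixed quantities $r$, $c$, $k$), so $g\bigl(f(h(r,c)k,|V(G)|)\bigr)$ is a constant. The term $h(r,c)|V(G)|$ is linear, and $q(|V(G)|)$ and $p(|V(G)|)$ are polynomials; hence the total running time is bounded by a polynomial in $|V(G)|$ whose degree and coefficients depend only on $r$, $c$, $k$, $p$, $q$. As this holds for every fixed $k$, the family of algorithms obtained by varying $k$ constitutes a $\ptas$ for the problem on $\GG$. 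I do not anticipate a genuine obstacle here: all the difficulty is already encapsulated in Theorem~\ref{thm-main}, and the only point requiring care is the (already-cited) implication that efficient fractional treewidth-fragility entails bounded expansion, which is what licenses the use of Theorem~\ref{thm-main} in the first place.
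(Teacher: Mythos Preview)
Your proposal is correct and follows exactly the approach the paper intends: the corollary is stated without its own proof and is obtained directly from Theorem~\ref{thm-main} together with the one-line observation (made just before the corollary) that the running time is polynomial when $f$ does not depend on its second argument. Your only added ingredient---checking that efficient fractional treewidth-fragility implies bounded expansion so Theorem~\ref{thm-main} applies---is precisely what the paper remarks after stating the theorem.
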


We say a problem admits a quasipolynomial-time approximation schemes ($\mathsf{QPTAS}$) if there exist quasipolynomial-time algorithms for approximating the problem within any fixed precision.  
Combining Theorems~\ref{thm-pll} and \ref{thm-main}, we obtain the following result.

\begin{corollary}
Let $c$ and $r$ be positive integers, $g:\mathbb{Z}^+\to\mathbb{Z}^+$ a single-exponential function, and $p$ a polynomial.  
Every $(\le\!r)$-distance determined $c$-near-monotone $(g,p)$-tw-tractable problem admits a $\mathsf{QPTAS}$ in any hereditary class of graphs with sublinear separators.
\end{corollary}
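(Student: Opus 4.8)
The plan is to prove Theorem~\ref{thm-main}, since the two corollaries then follow immediately by plugging in the appropriate $f$ (constant in its second argument for the $\ptas$, logarithmic via Theorem~\ref{thm-pll} for the $\mathsf{QPTAS}$) and checking the claimed time bound. So I will focus on the main theorem.

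\medskip

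\textbf{Overall strategy.} Fix a $(\le\!r)$-distance determined $c$-near-monotone $(g,p)$-tw-tractable problem and a graph $G\in\GG$ with weights $w$ and precision parameter $k$. The naive approach — apply fractional treewidth-fragility with parameter $k$ to get sets $X_1,\dots,X_m$, solve the problem optimally on each bounded-treewidth graph $G\setminus X_i$, and return the best solution — fails because admissibility is defined in terms of distances \emph{in $G$}, and distances in $G\setminus X_i$ can be strictly larger. The fix is twofold. First, instead of deleting $X_i$ we will only ``forbid'' the vertices of $X_i$ from the solution while still computing distances in all of $G$; this is exactly what the tw-tractability definition allows, via the parameter $X_0$ (we take $X_0 = V(G)\setminus X_i$). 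Second, and more importantly, to simulate ``distance up to $r$ in $G$'' by a bounded-treewidth computation, I need the tree decomposition to see not just $G\setminus X_i$ but enough of $G$ to reconstruct $\min(r,d_G(u,v))$ for all pairs $u,v\notin X_i$. The standard tool here is that in a bounded-expansion class, the \emph{$r$-th power} $G^r$ (or rather the graph recording $\min(r,d_G(\cdot,\cdot))$) has bounded expansion as well, with the bound depending only on the original expansion function and on $r$; and more usefully, one can augment a tree decomposition of $G\setminus X_i$ of width $t$ to a tree decomposition of an auxiliary graph $H_i$ on vertex set $V(G)\setminus X_i$ whose width is bounded by a function of $t$, $r$, and the expansion, such that $H_i$ determines all the relevant truncated distances. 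This augmentation step is where the bounded-expansion hypothesis and the function $h(r,c)$ enter.

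\medskip

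\textbf{Key steps, in order.} (1) Apply the $q$-efficient fractional treewidth-$f$-fragility algorithm to $G$ with parameter $k' \coloneqq h(r,c)\cdot k$ (the factor $h(r,c)$ compensates for the near-monotone ``damage'' and for the distance-simulation overhead); this yields $X_1,\dots,X_m$ with each vertex in at most $m/k'$ of them, together with tree decompositions of $G\setminus X_i$ of width at most $f(k',|V(G)|)$. (2) For each $i$, build the auxiliary graph $H_i$ on $V(G)\setminus X_i$ recording truncated distances in $G$, and extend the given tree decomposition to a tree decomposition of width at most $g'\big(f(k',|V(G)|)\big)$ — here I would prove a lemma (using bounded expansion, e.g.\ via bounded ``weak $r$-coloring numbers'') that controls how much the width grows; call the blow-up factor absorbed into $h(r,c)$. (3) Use tw-tractability with the tree decomposition of $H_i$ and $X_0 = V(G)\setminus X_i$ to compute, in time $g\big(f(h(r,c)k,|V(G)|)\big)\cdot p(|V(G)|)$, a maximum-weight admissible subset $A_i \subseteq V(G)\setminus X_i$; note that because $H_i$ records exactly the truncated distances in $G$ and the problem is $(\le\!r)$-distance determined, ``admissible in $H_i$'s world'' coincides with ``admissible in $G$'' for subsets avoiding $X_i$. (4) Return the heaviest $A_i$. (5) Analyze the approximation ratio: let $A$ be an optimal admissible set in $G$; apply $c$-near-monotonicity to get the system $\{R_v : v\in A\}$; since each vertex of $G$ lies in at most $m/k'$ of the $X_i$, and each $v\in A$ pulls in at most $c$ of the sets $R_v$ containing it, an averaging argument shows there is an index $i$ with $w\big(\bigcup_{v\in A\cap X_i} R_v\big)$ at most a $\tfrac{c}{k'}\cdot(\text{something})$-fraction of $w(A)$; choosing $h(r,c)$ to be, say, a small constant multiple of $c$ makes this fraction at most $1/k$, and then $A \setminus \bigcup_{v\in A\cap X_i} R_v$ is admissible in $G$, avoids $X_i$, and has weight at least $(1-1/k)w(A)$, so $w(A_i) \ge (1-1/k)\,\mathrm{OPT}$. (6) Collect the running times: $h(r,c)|V(G)|$ for preprocessing and distance truncation within each bag (using that bounded-expansion graphs are sparse, so truncated-distance computation is near-linear), plus $m = O(|V(G)|)$ iterations (since $m \le q(|V(G)|)$ by efficiency, and WLOG $m$ is polynomial) each costing $p(|V(G)|)\cdot g(f(h(r,c)k,|V(G)|))$, giving the stated bound.

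\medskip

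\textbf{Main obstacle.} The crux is Step~(2): showing that a tree decomposition of $G\setminus X_i$ of width $t$ can be upgraded, at the cost of a blow-up depending only on $r$ and the expansion of $\GG$, to a structure on $V(G)\setminus X_i$ from which all pairwise distances truncated at $r$ can be read off by a bounded-treewidth (indeed $\msol$) computation. Equivalently, I need the ``distance-$r$ profile'' graph on $V(G)\setminus X_i$ to have treewidth $O_{r,\GG}(t)$. The natural route is through the weak coloring numbers / generalized coloring numbers $\mathrm{wcol}_r$, which are bounded in any bounded-expansion class: a vertex $u$ within distance $r$ of $v$ is reachable via a path that is ``witnessed'' by a bounded number of vertices in any fixed weak-$r$-colouring order, and one can use this to add a bounded number of edges/labels per bag. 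Making this precise — in particular handling the vertices of $X_i$ that lie on short paths between solution vertices, which are invisible to the given decomposition — is the technical heart of the argument, and is presumably what the function $h$ in the statement is hiding. Everything else (the averaging over the $X_i$, the near-monotone patching, the bookkeeping of running times) is routine once this lemma is in hand.
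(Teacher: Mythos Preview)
Your Step~(2) is not merely ``the technical heart'' --- it is false as stated, and this is a genuine gap. Take $\GG$ to be the class of stars (forests, say), which has bounded expansion and is trivially fractionally treewidth-fragile. Let $G$ be a star with center $x$ and $n$ leaves, and suppose $X_i=\{x\}$. Then $G\setminus X_i$ is an edgeless graph of treewidth~$0$, but your ``distance-$2$ profile'' graph $H_i$ on the leaves is the complete graph $K_n$, since every pair of leaves is at $G$-distance~$2$ through $x$. No tree decomposition of width $O_{r,\GG}(0)$ exists for $K_n$. The problem is exactly the one you flagged: short $G$-paths between solution vertices may pass through $X_i$, and there is no bounded-width way to record all of them in a decomposition of $G\setminus X_i$.

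The paper avoids this entirely by going the other direction. Rather than enriching the decomposition to see $G$-distances, it \emph{shrinks the candidate set} so that $G$-distances and $(G\setminus X_i)$-distances agree on it, and then the plain decomposition of $G\setminus X_i$ suffices with no augmentation. Concretely, the paper first constructs (Section~\ref{sec-dist}) an orientation $\vec{G}$ of bounded outdegree $d'(r)$ that \emph{represents $(\le r)$-distances}: whenever $d_G(u,v)=b\le r$, there is an inward-directed walk of length $b$ from $u$ to $v$. Given $X_i$, define $Y_i$ to be the set of vertices from which some outward-directed walk of length at most $r$ reaches $X_i$. Then for any $u,v\notin Y_i$, every inward-directed witness walk between them misses $X_i$, so $\min(r,d_G(u,v))=\min(r,d_{G\setminus X_i}(u,v))$; hence a set $B\subseteq V(G)\setminus Y_i$ is admissible in $G$ iff it is admissible in $G\setminus X_i$ (Lemma~\ref{lemma-admis}). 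One then runs the tw-tractable algorithm on $G\setminus X_i$ with $X_0=V(G)\setminus Y_i$. The averaging step must now control $w(A\cap D_i)$ where $D_i=\bigcup_{v\in A\cap Y_i}R_v$; since $|Y_i|$ is at most $(d'(r)+1)^r$ times larger than $|X_i|$ in the relevant counting sense, this is where $h(r,c)=c(d'(r)+1)^r$ comes from, and one applies fragility with parameter $h(r,c)k$ rather than $ck$ (Lemma~\ref{lemma-avoid}). Your Step~(5) averaging only accounts for the factor $c$, not for the blow-up from $X_i$ to $Y_i$.
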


The idea of the algorithm from Theorem~\ref{thm-main} is quite simple: We consider the sets $X_1 , \ldots, X_m$
from the definition of fractional treewidth-$f$-fragility, extend them to suitable supersets $Y_1$, \ldots, $Y_m$, and
argue that for $i=1,\ldots, m$, any admissible set in $G \setminus X_i$ disjoint from $Y_i$ is also admissible in $G$, and that
for some $i$, the weight of the heaviest admissible set in $G \setminus X_i$ disjoint from $Y_i$ is within the factor of $1-1/k$ from the optimal one. 
The construction of the sets $Y_1$, \ldots, $Y_m$ is based on the existence of orientations with bounded outdegrees that represent all short paths, a result of independent interest that we present in Section~\ref{sec-dist}.

Let us remark one can develop the idea of this paper in further directions. 
Dvo\v{r}\'ak proved in~\cite{Dvorak_new}(via a substantially more involved argument) that every monotone maximization problem expressible in first-order logic admits a $\ptas$ in any efficiently fractionally treewidth-fragile class of graphs. 
Note that this class of problems is incomparable with the one considered in this paper (e.g., \textsc{Maximum Induced Forest} is not expressible in the first-order logic, while \textsc{Maximum Independent Set} consisting of vertices belonging to triangles is expressible in the first-order logic but does not fall into the scope of the current paper).

Finally, it is worth mentioning that our results only apply to maximization problems. 
We were able to extend the previous uses of fractional treewidth-fragility by giving a way to handle dependencies over any bounded distance.
However, for the minimization problems, we do not know whether fractional treewidth-fragility is sufficient even for the distance-$1$ problems. 
For a simple example, consider the \textsc{Minimum Vertex Cover} problem in fractionally treewidth-fragile graphs, or more generally in hereditary classes with sublinear separators. 
While the unweighted version can be dealt with by the local search method~\cite{Har-PeledQ17}, we do not know whether there exists a $\ptas$ for the weighted version of this problem.

\section{Paths and orientations in graphs with bounded expansion}\label{sec-dist}
For $r \in \mathbb{Z}^+_0$, a graph $H$ is an \emph{$r$-shallow minor} of a graph $G$ if $H$ can be obtained from a subgraph of $G$ by contracting pairwise vertex-disjoint connected subgraphs, each of radius at most $r$. 
For a function $f \colon \mathbb{Z}^+ \to \mathbb{Z}^+$, a class $\mathcal{G}$ of graphs has \emph{expansion bounded} by $f$ if for all non-negative integers $r$, all $r$-shallow minors of graphs from $\mathcal{G}$ have average degree at most $f(r)$. 
A class has bounded expansion if its expansion is bounded by some function $f$.
The theory of graph classes with bounded expansion has been developed in the last 15 years, and the concept has found many algorithmic and structural applications; see~\cite{NesetrilM12} for an overview. 
Crucially for us, this theory includes a number of tools for dealing with short paths. 
Moreover, as we have pointed out before, all hereditary graph classes with sublinear separators~\cite{DvorakN16} as well as all fractionally treewidth-fragile classes~\cite{Dvorak16} have bounded expansion.

Let $\vec{G}$ be an orientation of a graph $G$, i.e, $uv$ is an edge of $G$ if and only if the directed graph $\vec{G}$ contains
at least one of the directed edges $(u,v)$ and $(v,u)$; note that we allow $\vec{G}$ to contain both of them at the same time, and
thus for the edge $uv$ to be oriented in both directions.  We say that a directed graph $\vec{H}$ with the same vertex
set is a \emph{$1$-step fraternal augmentation of $\vec{G}$} if $\vec{G}\subseteq \vec{H}$, for all distinct edges
$(x,y),(x,z)\in E(\vec{G})$, either $(y,z)$ or $(z,y)$ is an edge of $\vec{H}$, and for each edge $(y,z)\in E(\vec{H})\setminus E(\vec{G})$,
there exists a vertex $x\in V(\vec{G})\setminus \{y,z\}$ such that $(x,y),(x,z)\in E(\vec{G})$.  
That is, to obtain $\vec{H}$ from $\vec{G}$, for each pair of edges $(x,y),(x,z)\in E(\vec{G})$ we add an edge
between $y$ and $z$ in one of the two possible directions (we do not specify the direction, but in practice we would choose
directions of the added edges that minimize the maximum outdegree of the resulting directed graph).
For an integer $a\ge 0$,
we say $\vec{F}$ is an \emph{$a$-step fraternal augmentation of $\vec{G}$} if there exists a sequence
$\vec{G}=\vec{G}_0,\vec{G}_1,\ldots,\vec{G}_a=\vec{F}$ where for $i=1,\ldots, a$, $\vec{G}_i$ is a $1$-step fraternal augmentation of $\vec{G}_{i-1}$.
We say $\vec{F}$ is an $a$-step fraternal augmentation of an undirected graph $G$ if $\vec{F}$ is an $a$-step fraternal augmentation
of some orientation of $G$.  A key property of graph classes with bounded expansion is the existence of fraternal augmentations with bounded
outdegrees.  Let us remark that whenever we speak about an algorithm returning an $a$-step fraternal augmentation $\vec{H}$ or taking one as an input,
this implicitly includes outputing or taking as an input the whole sequence of $1$-step fraternal augmentations ending in $\vec{H}$.
\begin{lemma}[Ne\v{s}et\v{r}il and Ossona de Mendez~\cite{NesetrilM08a}]\label{lemma-frat}
For every class $\GG$ with bounded expansion, there exists a function $d:\mathbb{Z}^+_0\to\mathbb{Z}^+$
such that for each $G\in\GG$ and each non-negative integer $a$, the graph $G$ has an $a$-step fraternal augmentation
of maximum outdegree at most $d(a)$.  Moreover, such an augmentation can be found in time $O(d(a)|V(G)|)$.
\end{lemma}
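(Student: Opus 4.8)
The plan is to prove the lemma by induction on $a$, constructing the sequence $\vec{G}_0,\vec{G}_1,\ldots$ one $1$-step augmentation at a time and maintaining a bound on the maximum out-degree. For the base case $a=0$ we only need an orientation of $G$ with bounded out-degree: since $\GG$ has bounded expansion, every subgraph of every $G\in\GG$ has average degree at most $f(0)$ (the $0$-shallow minors being exactly the subgraphs), so $G$ is $\lceil f(0)\rceil$-degenerate, and orienting each edge towards the later of its endpoints in a degeneracy ordering gives an acyclic orientation of out-degree at most $\lceil f(0)\rceil$, computable in linear time. Put $d(0):=\lceil f(0)\rceil$.

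For the inductive step, suppose we have $\vec{G}_{a-1}$ of maximum out-degree at most $d(a-1)$, with underlying undirected graph $G_{a-1}$. Form the $1$-step augmentation by adding, for every vertex $x$ and every pair $y,z$ of out-neighbours of $x$ in $\vec{G}_{a-1}$, an edge between $y$ and $z$; let $H$ be the resulting undirected graph and $T:=H-E(G_{a-1})$ the graph of the newly added (``fraternal'') edges. Keeping every edge of $\vec{G}_{a-1}$ oriented as before and orienting each edge of $T$ towards the later of its endpoints in a degeneracy ordering of $T$ produces a valid $1$-step fraternal augmentation $\vec{G}_a$ of $\vec{G}_{a-1}$, with maximum out-degree at most $d(a-1)$ plus the degeneracy of $T$. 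Hence the whole step reduces to bounding the degeneracy of $T$ (equivalently of $H$) by a quantity depending only on $a$ and $f$, and then one may take $d(a):=d(a-1)+(\text{that bound})$.

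The key claim is therefore that $T$ has bounded degeneracy. Two facts combine to give this. First, a straightforward induction on the number of augmentation steps shows that every edge of $\vec{G}_{a-1}$ represents a walk of length at most $2^{a-1}$ in $G$, and hence every edge of $T$ represents a walk of length at most $2^{a}$ (concatenate the walks for $(x,y)$ and $(x,z)$, reversing one); thus $T$ is a subgraph of the $2^{a}$-th power of $G$. Powers alone do not suffice, since they need not have bounded expansion, so one must also use the out-degree bound: each vertex $x$ contributes at most $\binom{d(a-1)}{2}$ fraternal edges, all lying inside the bounded-size set of out-neighbours of $x$. Putting these together, a subgraph of $T$ of average degree too large, together with a choice of one ``witness'' common in-neighbour per edge, can be turned — by contracting the short walks these witnesses certify into branch sets of radius $O(2^{a})$ — into an $O(2^{a})$-shallow minor of $G$ of arbitrarily large average degree, contradicting the bounded expansion of $\GG$. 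This gives the desired bound on the degeneracy of $T$, completing the induction; this last step is exactly the content of the cited result of Ne\v{s}et\v{r}il and Ossona de Mendez.

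I expect this degeneracy bound to be the main obstacle, and the delicate point is that the witness vertices certifying the fraternal edges need not lie in the subgraph of $T$ whose density is being bounded, so the minor construction and the counting of edges against the topological-minor density of $G$ have to be arranged with care (a naive bound on the number of witnesses fails, precisely because many vertices may share the same two out-neighbours). Granting that, the running-time claim is routine: a degeneracy ordering is computable in linear time, and at each of the $a$ rounds every vertex has out-degree at most $d(a-1)$, so enumerating all pairs of out-neighbours and emitting the new edges costs $O(d(a-1)^{2}|V(G)|)=O(|V(G)|)$ per round with the constant absorbed into $d(a)$; over all rounds this is $O(d(a)|V(G)|)$.
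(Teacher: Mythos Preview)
The paper does not give its own proof of this lemma: it is quoted as a result of Ne\v{s}et\v{r}il and Ossona de Mendez~\cite{NesetrilM08a} and used as a black box throughout Section~\ref{sec-dist}. There is thus no in-paper argument to compare your proposal against.

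That said, your outline is exactly the standard inductive scheme from~\cite{NesetrilM08a} (and the book~\cite{NesetrilM12}): orient $G$ via a degeneracy ordering for the base case, and at each augmentation step orient the graph $T$ of new fraternal edges via a degeneracy ordering of $T$, so that $d(a)=d(a-1)+\mathrm{degen}(T)$. The only substantive step, as you correctly isolate, is bounding $\mathrm{degen}(T)$ in terms of $a$ and the expansion function $f$ alone. Your paragraph on this is a plausibility sketch rather than a proof: the recipe ``pick one witness common in-neighbour per fraternal edge and contract the short walks into branch sets of radius $O(2^a)$'' does not directly produce a shallow minor of $G$ of comparable density, because distinct edges may share a witness, witnesses may coincide with vertices of the dense subgraph you start from, and the putative branch sets need not be pairwise disjoint --- precisely the difficulty you flag in your final paragraph. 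The actual argument in~\cite{NesetrilM08a} does not build a single minor this way; it proceeds through the grad hierarchy, showing inductively that $\nabla_r$ of the underlying graph of the $a$-step augmentation is bounded by a function of $\nabla_{r'}(G)$ for a suitable $r'=r'(r,a)$, and then reads off the degeneracy bound from the case $r=0$. So your induction scaffold and running-time analysis are fine, but the core density bound is deferred back to the cited reference rather than supplied.
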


As shown already in~\cite{NesetrilM08a}, fraternal augmentations can be used to succintly represent distances between
vertices of the graph.  For the purposes of this paper, we need a more explicit representation by an orientation of the
original graph (without the additional augmentation edges).  By a \emph{walk} in a directed graph $\vec{G}$,
we mean a sequence $W=v_0v_1v_2\ldots v_b$ such that for $i=1,\ldots,b$, $(v_{i-1},v_i)\in E(\vec{G})$ or $(v_i,v_{i-1})\in E(\vec{G})$;
that is, the walk does not have to respect the orientation of the edges.
The walk $W$ is \emph{inward directed} if for some $c\in\{0,\ldots,b\}$, we have $(v_i,v_{i+1})\in E(\vec{G})$ for $i=0,\ldots,c-1$
and $(v_i,v_{i-1})\in E(\vec{G})$ for $i=c+1,\ldots,b$.  For a positive integer $r$, an orientation $\vec{G}$
of a graph $G$ \emph{represents $(\le\!r)$-distances} if for each $u,v\in V(G)$ and each $b\in\{0,\ldots,r\}$,
the distance between $u$ and $v$ in $G$ is at most $b$ if and only if $\vec{G}$ contains an inward-directed
walk of length at most $b$ between $u$ and $v$.  Note that given such an orientation with bounded maximum outdegree for a fixed $r$,
we can determine the distance between $u$ and $v$ (up to distance $r$) by enumerating all (constantly many)
walks of length at most $r$ directed away from $u$ and away from $v$ and inspecting their intersections.

Our goal now is to show that graphs from classes with bounded expansion admit orientations with bounded maximum outdegree
that represent $(\le\!r)$-distances.  Let us define a more general notion used in the proof of this claim,
adding to the fraternal augmentations the information about the lengths of the walks in the original graph represented
by the added edges.
A \emph{directed graph with $(\le\!r)$-length sets} is a pair $(\vec{H},\ell)$, where $\vec{H}$ is a directed graph
and $\ell$ is a function assigning a subset of $\{1,\ldots,r\}$ to each \emph{unordered} pair $\{u,v\}$ of vertices of $\vec{H}$,
such that if neither $(u,v)$ nor $(v,u)$ is an edge of $\vec{H}$, then $\ell(\{u,v\})=\emptyset$.
We say that $(\vec{H},\ell)$ is an \emph{orientation} of a graph $G$ if $G$ is the underlying undirected graph of $\vec{H}$
and $\ell(\{u,v\})=\{1\}$ for each $uv\in E(G)$.
We say that $(\vec{H},\ell)$ is an \emph{$(\le\!r)$-augmentation} of $G$ if $V(\vec{H})=V(G)$, for each $uv\in E(G)$
we have $1\in\ell(\{u,v\})$, and for each $u,v\in V(G)$ and $b\in \ell(\{u,v\})$ there exists a walk of length $b$ from $u$ to $v$ in $G$.
Let $(\vec{H}_1,\ell_1)$ be another directed graph with $(\le\!r)$-length sets.  We say $(\vec{H}_1,\ell_1)$ is a \emph{1-step fraternal augmentation}
of $(\vec{H},\ell)$ if $\vec{H}_1$ is a $1$-step fraternal augmentation of $\vec{H}$ and for all distinct $u,v\in V(\vec{H})$ and $b\in\{1,\ldots,r\}$,
we have $b\in\ell_1(\{u,v\})$ if and only if $b\in\ell(\{u,v\})$ or there exist $x\in V(\vec{H})\setminus\{u,v\}$, $b_1\in \ell(\{x,u\})$,
and $b_2\in \ell(\{x,v\})$ such that $(x,u),(x,v)\in E(\vec{H})$ and $b=b_1+b_2$.  Note that a $1$-step fraternal augmentation of
an $(\le\!r)$-augmentation of a graph $G$ is again an $(\le\!r)$-augmentation of $G$.  The notion of an $a$-step fraternal augmentation
of a graph $G$ is then defined in the natural way, by starting with an orientation of $G$ and peforming the $1$-step fraternal augmentation operation
$a$-times.  Let us now restate Lemma~\ref{lemma-frat} in these terms (we just need to maintain the edge length sets, which can be done with
$O(a^2)$ overhead per operation).
\begin{lemma}\label{lemma-lensets}
Let $\GG$ be a class of graphs with bounded expansion, and let $d:\mathbb{Z}^+_0\to\mathbb{Z}^+$ be the function from Lemma~\ref{lemma-frat}.
For each $G\in\GG$ and each non-negative integer $a$, we can in time $O(a^2d(a)|V(G)|)$ construct
a directed graph with $(\le\!a+1)$-length sets $(\vec{H},\ell)$ of maximum outdegree at most $d(a)$
such that $(\vec{H},\ell)$ is an $a$-step fraternal augmentation of $G$.
\end{lemma}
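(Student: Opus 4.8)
\ The plan is to run the algorithm of Lemma~\ref{lemma-frat} and carry a length-set function along the entire sequence of $1$-step augmentations it produces. First I would apply Lemma~\ref{lemma-frat} to $G$ and $a$, obtaining in time $O(d(a)|V(G)|)$ a sequence of orientations $\vec{G}=\vec{G}_0,\vec{G}_1,\ldots,\vec{G}_a$ in which each $\vec{G}_i$ is a $1$-step fraternal augmentation of $\vec{G}_{i-1}$ and $\vec{G}_a$ has maximum outdegree at most $d(a)$. Since $\vec{G}_0\subseteq\vec{G}_1\subseteq\cdots\subseteq\vec{G}_a$, every $\vec{G}_i$ has maximum outdegree at most $d(a)$ as well, and in particular $|E(\vec{G}_i)|=O(d(a)|V(G)|)$.

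I would then build the length-set functions level by level. Put $\ell_0(\{u,v\})=\{1\}$ for every $uv\in E(G)$ and $\ell_0(\{u,v\})=\emptyset$ otherwise, so that $(\vec{G}_0,\ell_0)$ is, by definition, an orientation of $G$ in the sense of $(\le\!a+1)$-length sets, and hence an $(\le\!a+1)$-augmentation of $G$. For $i=1,\ldots,a$, given $\ell_{i-1}$, I would set $\ell_i:=\ell_{i-1}$ and then, running over every vertex $x$ and every pair of distinct out-neighbours $u,v$ of $x$ in $\vec{G}_{i-1}$, add to $\ell_i(\{u,v\})$ every value $b_1+b_2$ with $b_1\in\ell_{i-1}(\{x,u\})$, $b_2\in\ell_{i-1}(\{x,v\})$ and $b_1+b_2\le a+1$. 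Because the defining equivalence for a $1$-step fraternal augmentation with length sets only reads $\ell_{i-1}$, this level-by-level pass is a faithful evaluation of that equivalence and needs no fixed-point iteration within a level; thus $(\vec{G}_i,\ell_i)$ is a $1$-step fraternal augmentation of $(\vec{G}_{i-1},\ell_{i-1})$. Two invariants have to be checked in passing from $i-1$ to $i$: first, $\ell_i(\{u,v\})=\emptyset$ whenever $u$ and $v$ are non-adjacent in $\vec{G}_i$ (a nonempty value forces a vertex $x$ with $(x,u),(x,v)\in E(\vec{G}_{i-1})$, and the definition of a $1$-step fraternal augmentation of $\vec{G}_{i-1}$ then puts an edge between $u$ and $v$ into $\vec{G}_i$); second, $(\vec{G}_i,\ell_i)$ is again an $(\le\!a+1)$-augmentation of $G$, which is exactly the remark already recorded in the text---the walks of lengths $b_1$ and $b_2$ witnessing $b_1\in\ell_{i-1}(\{x,u\})$ and $b_2\in\ell_{i-1}(\{x,v\})$ concatenate into a walk of length $b_1+b_2$ between $u$ and $v$, while $\ell_i\supseteq\ell_{i-1}$ preserves $1\in\ell_i(\{u,v\})$ for $uv\in E(G)$. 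Outputting $(\vec{H},\ell):=(\vec{G}_a,\ell_a)$, together with the whole sequence $(\vec{G}_0,\ell_0),\ldots,(\vec{G}_a,\ell_a)$, then yields an $a$-step fraternal augmentation of $G$ with $(\le\!a+1)$-length sets and maximum outdegree at most $d(a)$, as required.

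For the running time, the only new cost over Lemma~\ref{lemma-frat} is the maintenance of $\ell$. Each set $\ell_{i-1}(\{x,u\})$ is a subset of $\{1,\ldots,a+1\}$, so handling one pair $u,v$ of out-neighbours of $x$ costs $O((a+1)^2)=O(a^2)$ time to enumerate the relevant pairs $(b_1,b_2)$; since this processing is done once per pair of out-edges sharing a tail---precisely the basic operation performed by the algorithm of Lemma~\ref{lemma-frat}---maintaining $\ell$ multiplies its running time by a factor $O(a^2)$, giving the claimed $O(a^2 d(a)|V(G)|)$ bound (the $O(a\,d(a)|V(G)|)$ space for storing the length sets is subsumed).

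I expect no genuinely hard step here---the statement is deliberately phrased as a restatement of Lemma~\ref{lemma-frat}. The only points demanding a little care are the ordering of the computation (level $i-1$ must be completed before level $i$ starts, so that $\ell_i$ is defined purely from $\ell_{i-1}$) and the timing accounting, i.e.\ that the number of ``merge'' events is controlled by the running time of Lemma~\ref{lemma-frat}; everything else is a direct unfolding of the definitions of ``$(\le\!r)$-augmentation'' and of ``$1$-step fraternal augmentation with length sets'' given just above the statement.
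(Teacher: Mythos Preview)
Your proposal is correct and follows precisely the approach the paper takes: the paper does not give a separate proof of this lemma at all, but simply introduces it with the sentence ``we just need to maintain the edge length sets, which can be done with $O(a^2)$ overhead per operation,'' and your write-up is an accurate expansion of that remark. The level-by-level computation of $\ell_i$ from $\ell_{i-1}$, the two invariants you check, and the accounting of the $O(a^2)$ overhead per merge are exactly what the paper's one-line justification is gesturing at.
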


Let $(\vec{H},\ell)$ be an $(\le\!r)$-augmentation $(\vec{H},\ell)$ of a graph $G$.
For $b\le r$, a \emph{length $b$ walk} in $(\vec{H},\ell)$ is a tuple
$(v_0v_1\ldots v_t, b_1,\ldots, b_t)$, where $v_0v_1\ldots v_t$ is a walk in $\vec{H}$,
$b_i\in\ell(\{v_{i-1},v_i\}$ for $i=1,\ldots,t$, and $b=b_1+\ldots+b_t$.  Note that
if there exists a length $b$ walk from $u$ to $v$ in $(\vec{H},\ell)$, then there also
exists a walk of length $b$ from $u$ to $v$ in $G$.
We say that $(\vec{H},\ell)$ \emph{represents $(\le\!r)$-distances} in $G$ if for all vertices $u,v\in V(G)$ at distance $b\le r$ from one another,
$(\vec{H},\ell)$ contains an inward-directed length $b$ walk between $u$ and $v$.
Next, we show that this property always holds for sufficient fraternal augmentations.

\begin{lemma}\label{lemma-augment}
Let $G$ be a graph and $r$ a positive integer and let $(\vec{H},\ell)$ be a directed graph with $(\le\!r)$-length sets.
If $(\vec{H},\ell)$ is obtained as an $(r-1)$-step fraternal augmentation of $G$, then
it represents $(\le\!r)$-distances in $G$.
\end{lemma}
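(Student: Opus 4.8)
The plan is to deduce the lemma from a more flexible statement proved by induction on the number of edges. Write $(\vec{G}_0,\ell_0),(\vec{G}_1,\ell_1),\ldots$ for the sequence of $1$-step fraternal augmentations with $(\vec{G}_0,\ell_0)$ an orientation of $G$ (so that $(\vec{H},\ell)=(\vec{G}_{r-1},\ell_{r-1})$), and recall that these augmentations are monotone: $\vec{G}_i\subseteq\vec{G}_{i+1}$ and $\ell_i(\{x,y\})\subseteq\ell_{i+1}(\{x,y\})$ for all $i$ and all $x,y$. The claim to prove is: \emph{for every $s\ge0$ and every length-$b$ walk $(v_0v_1\ldots v_t,b_1,\ldots,b_t)$ in $(\vec{G}_s,\ell_s)$ with $v_0,\ldots,v_t$ pairwise distinct and $b\le r$, the pair $(\vec{G}_{s+t-1},\ell_{s+t-1})$ contains an inward-directed length-$b$ walk between $v_0$ and $v_t$.} Lemma~\ref{lemma-augment} follows immediately: if $u,v\in V(G)$ satisfy $1\le d_G(u,v)=b\le r$, then a shortest path from $u$ to $v$, with each of its $b$ edges given length $1$, is such a walk with $s=0$ and $t=b$, so $(\vec{G}_{b-1},\ell_{b-1})$, and hence $(\vec{H},\ell)$ by monotonicity and $b-1\le r-1$, contains an inward-directed length-$b$ walk between $u$ and $v$; the case $b=0$ is trivial. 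Throughout I use the following unwinding of the definition: an inward-directed length-$b$ walk between $x$ and $y$ is a directed walk of $\ell$-length $b'$ from $x$ to a vertex $w$ together with a directed walk of $\ell$-length $b-b'$ from $y$ to the same $w$, for some $w$ and some $0\le b'\le b$.

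The base case $t=1$ is immediate: a single edge $v_0v_1$ with $b_1=b$ is already an inward-directed length-$b$ walk (with ``peak'' the endpoint the edge points to), and $s+t-1=s$. For $t\ge2$, if $(v_0,v_1)\in E(\vec{G}_s)$, I apply the induction hypothesis to the walk $(v_1\ldots v_t,b_2,\ldots,b_t)$, which has $t-1$ edges and total length $b-b_1\le r$: this gives, inside $(\vec{G}_{s+t-2},\ell_{s+t-2})$, a directed walk of $\ell$-length $b'$ from $v_1$ to some $w$ and a directed walk of $\ell$-length $b-b_1-b'$ from $v_t$ to $w$. Prepending the edge $(v_0,v_1)$ of length $b_1$ to the former yields an inward-directed length-$b$ walk between $v_0$ and $v_t$ inside $(\vec{G}_{s+t-2},\ell_{s+t-2})\subseteq(\vec{G}_{s+t-1},\ell_{s+t-1})$. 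Symmetrically, if $(v_t,v_{t-1})\in E(\vec{G}_s)$ I split off the last edge instead.

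The remaining case is that $v_0v_1$ appears in $\vec{G}_s$ only as $(v_1,v_0)$ and $v_{t-1}v_t$ only as $(v_{t-1},v_t)$. Let $v_mv_{m+1}$ be the first edge of the walk, reading from $v_0$, with $(v_m,v_{m+1})\in E(\vec{G}_s)$; then $1\le m\le t-1$, and by minimality of $m$ the preceding edge $v_{m-1}v_m$ is not $(v_{m-1},v_m)$, hence is $(v_m,v_{m-1})$. Since the $v_i$ are distinct, $(v_m,v_{m-1})$ and $(v_m,v_{m+1})$ are two distinct out-edges of $v_m$ in $\vec{G}_s$, so $(\vec{G}_{s+1},\ell_{s+1})$ contains an edge between $v_{m-1}$ and $v_{m+1}$ with $b_m+b_{m+1}\in\ell_{s+1}(\{v_{m-1},v_{m+1}\})$; this is where $b\le r$ is used, since $b_m+b_{m+1}\le b\le r$ ensures this value is not dropped from the $(\le\!r)$-length set. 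Now $(v_0\ldots v_{m-1}v_{m+1}\ldots v_t,\ b_1,\ldots,b_{m-1},b_m+b_{m+1},b_{m+2},\ldots,b_t)$ is a length-$b$ walk with $t-1$ edges and pairwise distinct vertices in $(\vec{G}_{s+1},\ell_{s+1})$, and the induction hypothesis gives an inward-directed length-$b$ walk between $v_0$ and $v_t$ in $(\vec{G}_{(s+1)+(t-1)-1},\ell_{(s+1)+(t-1)-1})=(\vec{G}_{s+t-1},\ell_{s+t-1})$. Since $s+t$ never increases in the recursion and starts at $b\le r$, every index occurring lies in $\{0,\ldots,r-1\}$, so all the $(\vec{G}_i,\ell_i)$ referred to are members of our sequence.

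The one delicate point is this last case, and specifically that the direction chosen for the edge added by a fraternal augmentation is not under our control. This is exactly why the whole argument is carried out through \emph{inward-directed} walks — which only ask for a common ``peak'' reachable by a directed subwalk from each endpoint, and therefore do not care how the augmentation oriented its new edges — rather than by trying to collapse a shortest path to a single, conveniently oriented edge. The remaining ingredients (that a subwalk of a walk with distinct vertices still has distinct vertices, the monotonicity of the augmentations, and the arithmetic of $\ell$-lengths when prepending an edge or contracting a cherry $v_{m-1}v_mv_{m+1}$) are routine.
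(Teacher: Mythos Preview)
Your proof is correct and rests on the same key idea as the paper's: whenever a walk in $(\vec{G}_s,\ell_s)$ is not inward-directed it contains a ``cherry'' $(v_m,v_{m-1}),(v_m,v_{m+1})$, and the fraternal augmentation step lets one contract it to a shorter length-$b$ walk in $(\vec{G}_{s+1},\ell_{s+1})$. The paper organizes this as a simple iteration (apply one contraction per augmentation level until the walk is inward-directed, which takes at most $b-1\le r-1$ levels), whereas you phrase it as an induction on $t$ with the invariant $s+t\le r$ and add the extra Cases~1 and~2 that peel off a correctly-oriented endpoint edge without spending an augmentation level; this buys a slightly sharper intermediate statement but is not needed for the lemma, so the paper's argument is a bit leaner.
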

\begin{proof}
For $b\le r$, consider any length $b$ walk $W=(v_0v_1\ldots v_t, b_1,\ldots, b_t)$
in an $(\le\!r)$-augmentation $(\vec{H}_1,\ell_1)$ of $G$, and let $(\vec{H}_2,\ell_2)$ be a $1$-step augmentation
of $(\vec{H}_1,\ell_1)$.  Note that $W$ is also a length $b$ walk between $v_0$ and $v_t$ in $(\vec{H}_2,\ell_2)$.
Suppose that $W$ is not inward-directed in $(\vec{H}_1,\ell_1)$, and thus there exists $i\in\{1,\ldots,t-1\}$ such that
$(v_i,v_{i-1}),(v_i,v_{i+1})\in E(\vec{H}_1)$.  By the definition of $1$-step fraternal augmentation,
this implies $b_i+b_{i+1}\in \ell_2(v_{i-1},v_{i+1})$, and thus
$(v_0\ldots v_{i-1}v_{i+1}\ldots v_t, b_1,\ldots,b_i+b_{i+1},\ldots b_t)$ is a length $b$ walk from $v_0$ to $v_t$
in $(\vec{H}_2,\ell_2)$.

Let $(\vec{G}_0,\ell_0)$, \ldots, $(\vec{G}_{r-1},\ell_{r-1})$ be a sequence of $(\le\!r)$-augmentations of $G$,
where $(\vec{G},\ell_0)$ is an orientation of $G$, $(\vec{G}_{r-1},\ell_{r-1})=(\vec{H},\ell)$, and
for $i=1,\ldots, r-1$, $(\vec{G}_i,\ell_i)$ is a $1$-step fraternal augmentation of $(\vec{G}_{i-1},\ell_{i-1})$.
Let $u$ and $v$ be any vertices at distance $b\le r$ in $G$, and let $P$ be a shortest path between
them.  Then $P$ naturally corresponds to a length $b$ walk $P_0$ in $(\vec{G}_0,\ell_0)$.
For $i=1,\ldots, r-1$, if $P_{i-1}$ is inward-directed, then let $P_i=P_{i-1}$, otherwise
let $P_i$ be a length $b$ walk in $(\vec{G}_i,\ell_i)$ obtained from $P_{i-1}$ as described in the previous paragaph.  Since each application
of the operation decreases the number of vertices of the walk, we conclude that $P_{r-1}$ is
an inward-directed length $b$ walk between $u$ and $v$ in $(\vec{H},\ell)$.
Hence, $(\vec{H},\ell)$ represents $(\le\!r)$-distances in $G$.
\end{proof}

Next, let us propagate this property back through the fraternal augmentations by orienting some of the edges
in both directions.  We say that $(\vec{H},\ell)$ is an \emph{$a$-step fraternal superaugmentation} of a graph $G$
if there exists an $a$-step fraternal augmentation $(\vec{F},\ell)$ of $G$ such that $V(\vec{F})=V(\vec{H})$,
$E(\vec{F})\subseteq E(\vec{H})$ and for each $(u,v)\in E(\vec{H})\setminus E(\vec{F})$, we have $(v,u)\in E(\vec{F})$.
We say that $(\vec{F},\ell)$ is a \emph{support} of $(\vec{H},\ell)$.

\begin{lemma}\label{lemma-back}
Let $G$ be a graph and $r$ a positive integer and let $(\vec{H},\ell)$ be an $(\le\!r)$-augmentation of $G$ of maximum outdegree $\Delta$
representing $(\le\!r)$-distances.  For $a\ge 1$, suppose that $(\vec{H},\ell)$ is an $a$-step fraternal superaugmentation of $G$.
Then we can in time $O(r^2\Delta|V(G)|)$ obtain an $(a-1)$-step fraternal superaugmentation of $G$ representing $(\le\!r)$-distances,
of maximum outdegree at most $(r+1)\Delta$.
\end{lemma}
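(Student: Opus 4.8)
My plan is to build the required $(a-1)$-step fraternal superaugmentation $(\vec{H}',\ell')$ by keeping the edges of the ``one step earlier'' augmentation and adding a controlled number of reverse edges that let us \emph{unpack} the last augmentation step inside any inward-directed walk. Concretely, let $(\vec{F},\ell)$ be a support of $(\vec{H},\ell)$, so $(\vec{F},\ell)$ is an $a$-step fraternal augmentation of $G$ with $E(\vec{F})\subseteq E(\vec{H})$; since $a\ge 1$, fix an $(a-1)$-step fraternal augmentation $(\vec{F}',\ell')$ of which $(\vec{F},\ell)$ is a $1$-step fraternal augmentation (this, together with the whole sequence, is available to the algorithm). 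Then $E(\vec{F}')\subseteq E(\vec{F})\subseteq E(\vec{H})$, so $\vec{F}'$ has maximum outdegree at most $\Delta$, and $\ell'(\{x,y\})\subseteq\ell(\{x,y\})\subseteq\{1,\dots,r\}$ for all $x,y$. I take $(\vec{F}',\ell')$ as the support of $(\vec{H}',\ell')$, so $E(\vec{F}')\subseteq E(\vec{H}')$ with the same length sets, and it only remains to decide which reverse edges to add.

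To select the reverse edges, consider any directed edge $(p,q)\in E(\vec{H})$ and any $\beta\in\ell(\{p,q\})$. Applying the length-set rule for the $1$-step fraternal augmentation $(\vec{F}',\ell')\to(\vec{F},\ell)$ to $\beta$ yields two cases: (i) $\beta\in\ell'(\{p,q\})$, or (ii) there is a vertex $x\notin\{p,q\}$ with $(x,p),(x,q)\in E(\vec{F}')$ and a split $\beta=\beta_1+\beta_2$ with $\beta_1\in\ell'(\{x,p\})$ and $\beta_2\in\ell'(\{x,q\})$. In case (i) I add the edge $(p,q)$ to $\vec{H}'$ (which then carries $\beta$ in its $\ell'$-set); in case (ii) I fix one such witness $x$ and add the reverse edge $(p,x)$ to $\vec{H}'$, so that, since $(x,q)\in E(\vec{F}')\subseteq E(\vec{H}')$, the pair $p\to x\to q$ is a directed walk in $\vec{H}'$ of total $\ell'$-length $\beta_1+\beta_2=\beta$. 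In either case the edge added is the reverse of an edge of $\vec{F}'$, so $(\vec{F}',\ell')$ is indeed a support of $(\vec{H}',\ell')$, and $(\vec{H}',\ell')$ is again an $(\le\!r)$-augmentation of $G$. Crucially, every out-edge added at a vertex $p$ is charged to a pair consisting of an out-edge $(p,q)$ of $\vec{H}$ (at most $\Delta$ of these) and a value $\beta\in\ell(\{p,q\})$ (at most $r$ of these), so the maximum outdegree of $\vec{H}'$ is at most $\Delta+r\Delta=(r+1)\Delta$.

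Next I would check that $(\vec{H}',\ell')$ represents $(\le\!r)$-distances. Given $u,v$ at distance $b\le r$, the hypothesis on $(\vec{H},\ell)$ provides an inward-directed length $b$ walk between $u$ and $v$ in $(\vec{H},\ell)$, which I view as a pair of directed walks $P_1$ (from $u$) and $P_2$ (from $v$) in $\vec{H}$ ending at a common vertex $w$ whose $\ell$-lengths sum to $b$. Replacing in $P_1$ and $P_2$ each directed edge $(p,q)$ with label $\beta$ either by the single $\vec{H}'$-edge $(p,q)$ with label $\beta$ (case (i)) or by the two $\vec{H}'$-edges $(p,x),(x,q)$ with labels $\beta_1,\beta_2$ (case (ii)) turns $P_1,P_2$ into directed walks $P_1',P_2'$ in $\vec{H}'$ from $u$ and from $v$ to $w$, of the same total $\ell'$-length $b$; concatenating $P_1'$ with the reverse of $P_2'$ is an inward-directed length $b$ walk between $u$ and $v$ in $(\vec{H}',\ell')$, as required.

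For the running time, for each of the at most $\Delta|V(G)|$ directed edges $(p,q)$ of $\vec{H}$ and each of the at most $r$ values $\beta\in\ell(\{p,q\})$ I test whether $\beta\in\ell'(\{p,q\})$ and otherwise retrieve a witness $x$ (recorded when that augmentation step was performed) together with a split $\beta=\beta_1+\beta_2$ in time $O(r)$, then add the corresponding edge; this totals $O(r^2\Delta|V(G)|)$, and assembling $\vec{H}'$ costs no more. The only step I expect to cause real difficulty is the outdegree bound: unpacking a shortcut edge at $p$ forces a reverse edge into a common in-neighbor $x$ of $p$, and $p$ may have arbitrarily many in-neighbors in $\vec{F}'$. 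The resolution is exactly the charging above---we never unpack all shortcut edges simultaneously, only those traversed by a single walk---so one reverse edge per out-edge of $\vec{H}$ and per admissible length is enough.
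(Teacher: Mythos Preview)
Your proof is correct and follows essentially the same approach as the paper: take the $(a-1)$-step augmentation as support, and for each directed edge $(p,q)\in E(\vec{H})$ with label $\beta$ either keep $(p,q)$ (when $\beta\in\ell'(\{p,q\})$) or add the reverse edge $(p,x)$ to a fraternal witness $x$, then charge every newly created out-edge at $p$ to a pair $(\text{out-edge of }\vec{H},\ \beta)$ to get the $(r+1)\Delta$ bound; the edge-replacement argument for inward-directed walks is identical. The only cosmetic difference is that the paper splits your case~(i) into two subcases (first handling bidirectional edges of $\vec{H}$, then handling lengths $\beta\in\ell\setminus\ell'$), whereas you treat all pairs $((p,q),\beta)$ uniformly.
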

\begin{proof}
Let $(\vec{F},\ell)$ be an $a$-step fraternal augmentation of $G$ forming a support of $(\vec{H},\ell)$,
obtained as a $1$-step fraternal augmentation of an $(a-1)$-step fraternal augmentation $(\vec{F}_1,\ell_1)$ of $G$.
Let $(\vec{H}_1,\ell_1)$ be the $(a-1)$-step fraternal superaugmentation of $G$
obtained from $(\vec{F}_1,\ell_1)$ as follows:
\begin{itemize}
\item For all distinct vertices $y,z\in V(G)$ such that $(y,z),(z,y)\in E(\vec{H})$, $(y,z)\in E(\vec{F}_1)$,
and $(z,y)\not\in E(\vec{F}_1)$, we add the edge $(z,y)$.
\item For each edge $(y,z)\in E(\vec{H})$ and integer $b\in\ell(\{y,z\})\setminus\ell_1(\{y,z\})$,
we choose a vertex $x\in V(G)\setminus\{y,z\}$ such that $(x,y),(x,z)\in E(\vec{F}_1)$ and
$b=b_1+b_2$ for some $b_1\in \ell_1(\{x,y\})$ and $b_2\in\ell_1(\{x,z\})$, and add the edge
$(y,x)$.  Note that such a vertex $x$ and integers $b_1$ and $b_2$ exist, since $b$ was added to $\ell(\{y,z\})$
when $(\vec{F},\ell)$ was obtained from $(\vec{F}_1,\ell_1)$ as a $1$-step fraternal augmentation.
\end{itemize}
Each edge $(y,x)\in E(\vec{H}_1)\setminus E(\vec{H})$ arises from an edge $(y,z)\in E(\vec{H})$
leaving $y$ and an element $b\in \ell(\{y,z\})\setminus\ell_1(\{y,z\})$, and each such pair contributes at most
one edge leaving $y$.  Hence, the maximum outdegree of $\vec{H}_1$ is at most $(r+1)\Delta$.

Consider a length $b$ inwards-directed walk $(v_0v_1\ldots v_t,b_1,\ldots,b_t)$ in $\vec{H}$, for any $b\le r$.
Then $\vec{H}$ contains a length $b$ inwards-directed walk from $v_0$ to $v_t$ obtained by natural edge replacements:
For any edge $(y,z)\in E(\vec{H})$ of this walk and $b'\in \ell_i(\{y,z\})$, the construction described above
ensures that if $(y,z)\not\in E(\vec{H}_1)$ or $b'\not\in \ell_1(\{y,z\})$, then
there exists $x\in V(G)\setminus\{y,z\}$ such that $(y,x),(x,z)\in E(\vec{H}_1)$ and $b'=b''+b'''$ for some
$b''\in \ell_1(\{x,y\})$ and $b'''\in\ell_1(\{x,z\})$, and we can replace the edge $(y,z)$ in the walk
by the edges $(y,x)$ and $(x,z)$ of $E(\vec{H}_1)$.
Since $\vec{H}$ represents $(\le\!r)$-distances in $G$, this transformation shows that so does $\vec{H}_1$.
\end{proof}
We are now ready to prove the main result of this section.
\begin{lemma}\label{lem:orient}
For any class $\GG$ with bounded expansion, there exists a function $d':\mathbb{Z}^+\to\mathbb{Z}^+$
such that for each $G\in\GG$ and each positive integer $r$, the graph $G$ has an orientation with maximum outdegree at
most $d'(r)$ that represents $(\le\!r)$-distances in $G$. Moreover, such an orientation can be found in time $O(r^2d'(r)|V(G)|)$.
\end{lemma}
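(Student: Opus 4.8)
The plan is to chain together the three lemmas already developed in this section. Starting from Lemma~\ref{lemma-lensets} with $a=r-1$, we obtain in time $O(r^2 d(r-1)|V(G)|)$ a directed graph with $(\le\!r)$-length sets $(\vec{F},\ell)$ of maximum outdegree at most $d(r-1)$ that is an $(r-1)$-step fraternal augmentation of $G$. By Lemma~\ref{lemma-augment}, this $(\vec{F},\ell)$ represents $(\le\!r)$-distances in $G$. Now $(\vec{F},\ell)$ is trivially an $(r-1)$-step fraternal \emph{superaugmentation} of $G$ (take it as its own support), so we are in position to apply Lemma~\ref{lemma-back} repeatedly.

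The main step is the iteration: apply Lemma~\ref{lemma-back} a total of $r-1$ times, each time reducing the number of augmentation steps by one while preserving the property of representing $(\le\!r)$-distances. Concretely, if after $j$ applications we have an $(r-1-j)$-step fraternal superaugmentation $(\vec{H}^{(j)},\ell)$ of $G$ representing $(\le\!r)$-distances with maximum outdegree $\Delta_j$, then Lemma~\ref{lemma-back} produces an $(r-2-j)$-step fraternal superaugmentation representing $(\le\!r)$-distances with maximum outdegree $\Delta_{j+1}\le (r+1)\Delta_j$, in time $O(r^2\Delta_j|V(G)|)$. Since $\Delta_0 = d(r-1)$, we get $\Delta_j \le (r+1)^j d(r-1)$, and after $r-1$ iterations we reach a $0$-step fraternal superaugmentation of $G$ representing $(\le\!r)$-distances of maximum outdegree at most $(r+1)^{r-1}d(r-1)$. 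A $0$-step fraternal superaugmentation of $G$ is, by definition, an orientation of $G$ where each edge may be oriented in one or both directions --- i.e.\ an orientation of $G$ in the sense used earlier in the excerpt (underlying undirected graph is $G$, with $\ell(\{u,v\})=\{1\}$ on edges). We set $d'(r) := (r+1)^{r-1}d(r-1)$, noting $d'$ depends only on $\GG$; the case $r=1$ is handled directly by taking any orientation of $G$, which trivially represents $(\le\!1)$-distances.

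For the running time, the initial construction costs $O(r^2 d(r-1)|V(G)|)$ and the $j$-th application of Lemma~\ref{lemma-back} costs $O(r^2\Delta_j|V(G)|) = O(r^2 (r+1)^j d(r-1)|V(G)|)$; summing the geometric series over $j=0,\ldots,r-2$ gives total time $O(r^2 (r+1)^{r-1} d(r-1)|V(G)|) = O(r^2 d'(r)|V(G)|)$, which is $O(r^2 d'(r)|V(G)|)$ as claimed. The one point requiring care --- though it is already essentially handled by the statement of Lemma~\ref{lemma-back} --- is checking that ``$(\le\!r)$-augmentation'' and ``representing $(\le\!r)$-distances'' are maintained as hypotheses at every stage so that Lemma~\ref{lemma-back} keeps applying; since a $1$-step fraternal augmentation of an $(\le\!r)$-augmentation is again one, and Lemma~\ref{lemma-back} explicitly outputs something representing $(\le\!r)$-distances, the induction goes through cleanly. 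I do not anticipate a genuine obstacle here: the content is entirely in Lemmas~\ref{lemma-augment} and~\ref{lemma-back}, and this final lemma is the bookkeeping that packages them into the statement needed for the rest of the paper.
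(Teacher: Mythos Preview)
Your proof is correct and follows essentially the same route as the paper: obtain an $(r-1)$-step fraternal augmentation via Lemma~\ref{lemma-lensets}, invoke Lemma~\ref{lemma-augment} to get the $(\le\!r)$-distance representation, and then iterate Lemma~\ref{lemma-back} down to a $0$-step superaugmentation, setting $d'(r)=(r+1)^{r-1}d(r-1)$. Your version merely adds explicit bookkeeping (the geometric-series running-time bound and the remark on the $r=1$ case) that the paper leaves implicit.
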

\begin{proof}
Let $d$ be the function from Lemma~\ref{lemma-frat}, and let $d'(r)=(r+1)^{r-1}d(r-1)$.
By Lemma~\ref{lemma-lensets}, we obtain an $(r-1)$-step fraternal augmentation $(\vec{H},\ell)$ of $G$ of maximum outdegree at most $d(r-1)$.
By Lemma~\ref{lemma-augment}, $(\vec{H},\ell)$ represents $(\le\!r)$-distances in $G$.  Repeatedly applying Lemma~\ref{lemma-back},
we obtain a $0$-step fraternal superaugmentation $(\vec{G},\ell_0)$ of $G$ of maximum outdegree at most $d'(r)$ representing $(\le\!r)$-distances.
Clearly, $\vec{G}$ is an orientation of $G$ of maximum outdegree at most $d'(r)$ representing $(\le\!r)$-distances.
\end{proof}

\section{Approximation schemes}

Let us now prove Theorem~\ref{thm-main}.  To this end, let us start with a lemma to be applied to
the sets arising from fractional treewidth-fragility.

\begin{lemma}\label{lemma-avoid}
Let $\vec{G}$ be an orientation of a graph $G$ with maximum outdegree $\Delta$.  Let $A$ be a set of vertices of $G$
and for a positive integer $c$, let $\{R_v:v\in A\}$ be a system of subsets of $A$ such that each vertex belongs to at most $c$ of the
subsets.  For $X\subseteq V(G)$ and a positive integer $r$, let $D_{\vec{G},r}(X)$ be the union of the sets $R_v$
for all vertices $v\in V(G)$ such that $\vec{G}$ contains a walk from $v$ to $X$ of length at most $r$ directed away from $v$.
For a positive integer $k$, let $X_1$, \ldots, $X_m$ be a system of subsets of $V(G)$ such that each vertex belongs
to at most $\frac{m}{c(\Delta+1)^rk}$ of the subsets.  For any assignment $w$ of non-negative weights to vertices of $G$,
there exists $i\in\{1,\ldots,m\}$ such that $w(A\setminus D_{\vec{G},r}(X_i))\ge (1-1/k)w(A)$.
\end{lemma}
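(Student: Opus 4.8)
The plan is to run a straightforward averaging (double-counting) argument. First I would show the aggregate bound
\[
\sum_{i=1}^m w\bigl(D_{\vec{G},r}(X_i)\bigr)\ \le\ \frac{m}{k}\,w(A).
\]
Since every $D_{\vec{G},r}(X_i)$ is by definition a union of sets $R_v$ and each $R_v\subseteq A$, we have $D_{\vec{G},r}(X_i)\subseteq A$ for all $i$; hence averaging over $i\in\{1,\ldots,m\}$ produces an index $i$ with $w\bigl(D_{\vec{G},r}(X_i)\bigr)\le \frac1k w(A)$, and for this $i$,
\[
w\bigl(A\setminus D_{\vec{G},r}(X_i)\bigr)=w(A)-w\bigl(D_{\vec{G},r}(X_i)\bigr)\ge (1-1/k)\,w(A),
\]
which is the conclusion. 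So everything reduces to the aggregate bound.

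To prove the aggregate bound I would fix a vertex $u\in A$ and estimate the number of indices $i$ with $u\in D_{\vec{G},r}(X_i)$. Unfolding the definition, $u\in D_{\vec{G},r}(X_i)$ means there is some $v\in A$ with $u\in R_v$ such that $\vec{G}$ contains a walk of length at most $r$ from $v$ into $X_i$ directed away from $v$. Let $S_u$ be the set of all vertices of $G$ that are reachable by a walk of length at most $r$ directed away from some $v\in A$ with $u\in R_v$. By hypothesis $u$ lies in at most $c$ of the sets $R_v$, and since $\vec{G}$ has maximum outdegree $\Delta$, the number of vertices reachable from a fixed vertex by directed walks of length at most $r$ is at most $\sum_{\ell=0}^{r}\Delta^\ell\le\sum_{\ell=0}^{r}\binom{r}{\ell}\Delta^\ell=(\Delta+1)^r$. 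Hence $|S_u|\le c(\Delta+1)^r$.

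Now observe that $u\in D_{\vec{G},r}(X_i)$ forces $X_i\cap S_u\neq\emptyset$. Each vertex of $S_u$ belongs to at most $\frac{m}{c(\Delta+1)^r k}$ of the sets $X_1,\ldots,X_m$, so the number of indices $i$ for which $X_i$ meets $S_u$ — and in particular the number of $i$ with $u\in D_{\vec{G},r}(X_i)$ — is at most $|S_u|\cdot\frac{m}{c(\Delta+1)^r k}\le \frac{m}{k}$. Summing this over $u\in A$ weighted by $w(u)$ gives
\[
\sum_{i=1}^m w\bigl(D_{\vec{G},r}(X_i)\bigr)=\sum_{u\in A} w(u)\cdot\bigl|\{\,i:u\in D_{\vec{G},r}(X_i)\,\}\bigr|\le \frac{m}{k}\,w(A),
\]
completing the argument.

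The whole proof is routine counting; the only step that needs any thought is the estimate $|S_u|\le c(\Delta+1)^r$, i.e.\ controlling how far a bounded-outdegree orientation can ``spread'' the sets $R_v$ along short directed walks. This is precisely the place where the factor $c(\Delta+1)^r$ in the sparsity hypothesis on the $X_i$'s is used, and where, in the proof of Theorem~\ref{thm-main}, the orientation supplied by Lemma~\ref{lem:orient} (with $\Delta=d'(r)$) will be substituted so that avoiding $D_{\vec{G},r}(X_i)$ guarantees admissibility transfers between $G$ and $G\setminus X_i$.
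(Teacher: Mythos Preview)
Your proof is correct and follows essentially the same averaging/double-counting argument as the paper: your set $S_u$ coincides with the paper's $B(z)$, the bound $|S_u|\le c(\Delta+1)^r$ is identical, and the remaining counting is the same. The only cosmetic difference is that the paper phrases the conclusion as a proof by contradiction rather than a direct averaging, and does not spell out the binomial bound $\sum_{\ell=0}^r\Delta^\ell\le(\Delta+1)^r$.
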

\begin{proof}
For a vertex $z\in A$, let $B(z)$ be the set of vertices reachable in $\vec{G}$ from vertices $v\in A$ such that $z\in R_v$
by walks of length at most $r$ directed away from $v$.  Note that $|B(z)|\le c(\Delta+1)^r$ and that for each $X\subseteq V(G)$,
we have $z\in D_{\vec{G},r}(X)$ if and only if $B(z)\cap X\neq \emptyset$.

Suppose for a contradiction that for each $i$ we have $w(A\setminus D_{\vec{G},r}(X_i))<(1-1/k)w(A)$, and thus
$w(D_{\vec{G},r}(X_i))>w(A)/k$.  Then
\begin{align*}
\frac{m}{k}w(A)&<\sum_{i=1}^m w(D_{\vec{G},r}(X_i))=\sum_{i=1}^m \sum_{z\in D_{\vec{G},r}(X_i)} w(z)=\sum_{i=1}^m \sum_{z\in A:B(z)\cap X_i\neq\emptyset} w(z)\\
&\le \sum_{i=1}^m\sum_{z\in A} w(z)|B(z)\cap X_i|=\sum_{z\in A} w(z) \sum_{i=1}^m |B(z)\cap X_i|\\
&=\sum_{z\in A} w(z)\sum_{x\in B(z)} |\{i\in\{1,\ldots,m\}:x\in X_i\}|\le \sum_{z\in A} w(z)\sum_{x\in B(z)} \frac{m}{c(\Delta+1)^rk}\\
&=\sum_{z\in A} w(z)|B(z)| \frac{m}{c(\Delta+1)^rk}\le \sum_{z\in A} w(z)\frac{m}{k}=\frac{m}{k}w(A),
\end{align*}
which is a contradiction.
\end{proof}

Next, let us derive a lemma on admissibility for $(\le\!r)$-distance determined problems.

\begin{lemma}\label{lemma-admis}
For a positive integer $r$, let $\vec{G}$ be an orientation of a graph $G$ representing $(\le\!r)$-distances.
For a set $X\subseteq V(G)$, let $Y_{\vec{G},r}(X)$ be the set of vertices $y$ such that
$\vec{G}$ contains a walk from $y$ to $X$ of length at most $r$ directed away from $y$.  For any $(\le\!r)$-distance determined problem,
a set $B\subseteq V(G)\setminus Y_{\vec{G},r}(X)$ is admissible in $G$ if and only if it is admissible in $G-X$.
\end{lemma}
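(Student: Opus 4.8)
The plan is to show that the vertices outside $Y_{\vec G,r}(X)$ have the same pairwise $(\le\!r)$-distances in $G$ and in $G-X$, and then invoke the definition of $(\le\!r)$-distance determined directly (with the bijection being the identity on $B$). Concretely, I would prove the following claim: for all $u,v\in V(G)\setminus Y_{\vec G,r}(X)$ and all $b\in\{0,\ldots,r\}$, the distance between $u$ and $v$ in $G$ is at most $b$ if and only if the distance between $u$ and $v$ in $G-X$ is at most $b$. Given this claim, since $B\subseteq V(G)\setminus Y_{\vec G,r}(X)$, the function $\min(r,d_G(\cdot,\cdot))$ restricted to $B$ equals $\min(r,d_{G-X}(\cdot,\cdot))$, so by the defining property of $(\le\!r)$-distance determined problems, $B$ is admissible in $G$ iff it is admissible in $G-X$.

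One direction of the claim is immediate: $G-X$ is a subgraph of $G$, so $d_G(u,v)\le d_{G-X}(u,v)$ always, giving the "if" direction. For the "only if" direction, suppose $u,v\notin Y_{\vec G,r}(X)$ and $d_G(u,v)=b\le r$. Since $\vec G$ represents $(\le\!r)$-distances, there is an inward-directed walk $W=v_0v_1\ldots v_t$ in $\vec G$ from $u=v_0$ to $v=v_t$ of length $t\le b\le r$, with some center index $c$ such that the edges $(v_i,v_{i+1})$ for $i<c$ and $(v_i,v_{i-1})$ for $i>c$ all point away from $v_c$. The key observation is that $W$ avoids $X$: if some $v_j\in X$, then following the inward-directed walk from the appropriate endpoint ($v_0$ if $j\le c$, traversing $v_0,\ldots,v_j$; or $v_t$ if $j\ge c$, traversing $v_t,\ldots,v_j$) gives a walk of length at most $r$ directed away from that endpoint and reaching $X$, so that endpoint would lie in $Y_{\vec G,r}(X)$, a contradiction. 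Hence $W$ is a walk in $G-X$ between $u$ and $v$ of length $\le b$, so $d_{G-X}(u,v)\le b$.

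The only slightly delicate point is the direction bookkeeping in the "$W$ avoids $X$" argument: one must check that both the initial segment $v_0\ldots v_c$ and the reversed final segment $v_t\ldots v_c$ are walks in $\vec G$ directed away from their starting endpoints, which is exactly what the definition of inward-directed provides, and that their lengths are $c\le t\le r$ and $t-c\le t\le r$ respectively. I expect this to be the main (though still routine) obstacle; everything else is a direct unwinding of definitions. Note also the corner cases $b=0$ (then $u=v$, trivial) and the case where one endpoint of a short inward-directed walk lies in $X$ are subsumed by taking $j\in\{0,t\}$ in the argument above, since a length-$0$ walk from $v_0$ to $v_0$ is directed away from $v_0$ and already witnesses $v_0\in Y_{\vec G,r}(X)$ if $v_0\in X$.
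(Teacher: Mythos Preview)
Your proposal is correct and follows essentially the same route as the paper's proof: reduce to showing that $\min(r,d_G(u,v))=\min(r,d_{G-X}(u,v))$ for $u,v\in B$, use the inward-directed walk guaranteed by the orientation, and argue that this walk avoids $X$ because any vertex of $X$ on it would be reached by a short outward-directed walk from one of the endpoints. The paper's version is terser (it simply asserts $V(P)\cap X=\emptyset$ from $u,v\notin Y_{\vec G,r}(X)$), while you spell out the case split $j\le c$ versus $j\ge c$; but the content is identical.

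One small slip to fix in your write-up: you describe the edges of the inward-directed walk as ``point[ing] away from $v_c$'', but by the definition of inward-directed they point \emph{toward} $v_c$ (equivalently, away from the endpoints $v_0$ and $v_t$). Your subsequent argument already uses the correct direction (``directed away from that endpoint''), so this is only a wording issue, not a gap.
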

\begin{proof}
Since the problem is $(\le\!r)$-distance determined, it suffices to show that
$\min(r,d_G(u,v))=\min(r,d_{G-X}(u,v))$ holds for all $u,v\in B$.  Clearly,
$d_G(u,v)\le d_{G-X}(u,v)$, and thus it suffices to show that if the distance between $u$ and $v$ is $G$ is $b\le r$,
then $G-X$ contains a walk of length $b$ between $u$ and $v$.  Since $\vec{G}$ represents $(\le\!r)$-distances,
there exists an inward-directed walk $P$ of length $b$ between $u$ and $v$ in $\vec{G}$.  Since $u,v\not\in Y_{\vec{G},r}(X)$,
we have $V(P)\cap X=\emptyset$, and thus $P$ is also a walk of length $b$ between $u$ and $v$ in $G-X$.
\end{proof}

We are now ready to prove the main result.

\begin{proof}[Proof of Theorem~\ref{thm-main}]
Let $d'$ be the function from Lemma~\ref{lem:orient} for the class $\GG$.
Let us define $h(r,c)=c(d'(r)+1)^r$.  The algorithm is as follows.
Since $\GG$ is $q$-efficiently fractionally treewidth-$f$-fragile, in time $q(|V(G)|)$
we can find sets $X_1, \ldots, X_m\subseteq V(G)$ such that each vertex belongs to at most $\frac{m}{h(r,c)k}$
of them, and for each $i$, a tree decomposition of $G-X_i$ of width at most $f(h(r,c)k,|V(G)|)$.
Clearly, $m\le q(|V(G)|)$.  Next, using Lemma~\ref{lem:orient}, we find an orientation $\vec{G}$ of $G$ that
represents $(\le\!r)$-distances.  Let $Y_{\vec{G},r}$ be defined as in the statement of Lemma~\ref{lemma-admis}.
Since the problem is $(g,p)$-tw-tractable problem, for each $i$ we can in time $p(|V(G)|)\cdot g(f(h(r,c)k,|V(G)|))$
find a subset $A_i$ of $V(G)\setminus Y_{\vec{G},r}(X_i)$ admissible in $G-X_i$ of largest weight.
By Lemma~\ref{lemma-admis}, each of these sets is admissible in $G$; the algorithm return the heaviest
of the sets $A_1$, \ldots, $A_m$.

As the returned set is admissible in $G$, it suffices to argue about its weight.  Let $A$ be a heaviest admissible set
in $G$.  Let $\{R_v\subseteq A:v\in A\}$ be the system of subsets from the definition of $c$-near-monotonicity,
and let $D_{\vec{G},r}$ be defined as in the statement of Lemma~\ref{lemma-avoid}.  By the definition of $c$-near-monotonicity,
for each $i$ the set $A\setminus D_{\vec{G},r}(X_i)$ is admissible in $G$.  Since $v\in R_v$ for each $v\in A$, we have
$Y_{\vec{G},r}(X_i)\subseteq D_{\vec{G},r}(X_i)$, and thus by Lemma~\ref{lemma-admis}, $A\setminus D_{\vec{G},r}(X_i)$
is also admissible in $G-X_i$, and by the choice of $A_i$, we have $w(A_i)\ge w(A\setminus D_{\vec{G},r}(X_i))$.
By Lemma~\ref{lemma-avoid}, we conclude that for at least one $i$, we have $w(A_i)\ge (1-1/k)w(A)$, as required.
\end{proof}

\end{document}